\documentclass[%
 reprint,
groupedaddress,
 amsmath,amssymb,
 aps,
prx,
]{revtex4-2}

\usepackage{graphicx}
\usepackage{svg} 
\usepackage{dcolumn}
\usepackage{bm}
\usepackage{comment}
\usepackage{amsthm}
\usepackage{mathtools}
\usepackage{color}
\usepackage{algorithm2e}
\usepackage{amssymb}
\usepackage{booktabs}  
\usepackage{multirow}

\usepackage{braket}
\usepackage{subcaption}


\newtheorem{theorem}{\bf Theorem}[section]

\newtheorem{assumption}{\bf Assumption}[section]  
\newtheorem{Algorithm}{\bf Algorithm}[section]


\usepackage{bbm}
\usepackage{bm}



\newcommand\rms{\mathrm{s}}

\newcommand\rmu{\mathrm{u}}

\newcommand\rmU{\mathrm{U}}


\newcommand\bbr{\mathbb{R}}
\newcommand\bbs{\mathbb{S}}

\newcommand\bbu{\mathbb{U}}


\newcommand\calF{\mathcal{F}}



\begin{document}

\preprint{APS/123-QED}

\title{Model predictive quantum control:\\
A modular approach for efficient and robust quantum optimal control}

\author{Eya Guizani$^1$ and Julian Berberich$^2$}
 \affiliation{$^1$M.Sc. student, University of Stuttgart, 70569 Stuttgart, Germany}
 \affiliation{$^2$University of Stuttgart, Institute for Systems Theory and Automatic Control and Center for Integrated Quantum Science and Technology (IQST), 70569 Stuttgart, Germany}
 \email{julian.berberich@ist.uni-stuttgart.de}

\date{\today}

\begin{abstract}
Model predictive control (MPC) is one of the most successful modern control methods. It relies on repeatedly solving a finite-horizon optimal control problem and applying the beginning piece of the optimal input. In this paper, we develop a modular framework for improving efficiency and robustness of quantum optimal control (QOC) via MPC.
We first provide a tutorial introduction to basic concepts of MPC from a QOC perspective.
We then present multiple MPC schemes, ranging from simple approaches to more sophisticated schemes which admit stability guarantees.
This yields a modular framework which can be used 1) to improve efficiency of open-loop QOC and 2) to improve robustness of closed-loop quantum control by incorporating feedback.
We demonstrate these benefits with numerical results, where we benchmark the proposed methods against competing approaches.
\end{abstract}

\maketitle


\section{Introduction}

Quantum control is a key ingredient for the realization of quantum technologies~\cite{dong2010quantum,altafini2012modeling,glaser2015training,dong2022quantum,koch2022quantum}.
Quantum optimal control (QOC) is concerned with finding control inputs which make a quantum system behave optimally while respecting physical constraints, see~\cite{werschnik2007quantum,bonnard2012optimal,rembold2020introduction,boscain2021introduction,koch2022quantum,ansel2024introduction} for introductory and overview works.
QOC problems are computationally complex due to the exponential scaling of the system dimension.
Moreover, the bilinearity of the dynamics makes QOC a non-convex optimization problem for which an optimal solution can be hard to find. 
Motivated by these challenges, the literature contains various tailored optimal control methods for quantum systems, including gradient-based approaches such as gradient ascent pulse engineering (GRAPE)~\cite{khaneja2005optimal} and Krotov's method~\cite{reich2012monotonically}, or gradient-free methods such as chopped random basis (CRAB) optimization~\cite{caneva2011chopped,doria2011optimal,mueller2022one}.
Despite significant advances, the development of efficient and robust QOC methods remains an active field of research~\cite{koch2022quantum}.

In this paper, we introduce a modular framework for improving efficiency and robustness of QOC via model predictive control (MPC)~\cite{rawlings2020model}.
MPC is one of the most popular modern control methods with successful applications across countless domains including process control, robotics, automotive, aerospace, energy systems, and more.
The basic idea of MPC is to repeatedly solve optimal control problems over a fixed time horizon, to apply the beginning piece of the optimal input, and to repeat based on the newly measured state.
In this paper, we propose the model predictive quantum control (MPQC) framework which relies on repeatedly solving smaller QOC problems.
We present multiple MPQC schemes, ranging from a simple approach based on standard QOC problems to more sophisticated MPQC schemes which include additional constraints on the final state.
For the latter, we derive theoretical guarantees on convergence to the target state.
The proposed framework is modular in the sense that it can be combined with existing numerical optimization techniques for quantum systems including GRAPE, Krotov's method, and CRAB optimization.
We distinguish between two applications:
Open-loop and closed-loop (feedback) quantum control.
In open-loop control, the complete input trajectory is computed offline and then applied to the quantum system without intermediate measurements.
In closed-loop control, measurements are taken to obtain information about the quantum state, e.g., via state tomography~\cite{cramer2010efficient}, and thereby to adapt to uncertainties or noise.
MPQC can be applied to improve efficiency and robustness in both scenarios.
In open-loop control, the MPQC framework can be used to solve QOC problems more efficiently by partitioning them into smaller problems.
In closed-loop control, the MPQC framework allows us to integrate feedback into QOC and, in this way, achieve robustness against noise and model mismatch.
We demonstrate the efficiency and robustness benefits of MPQC with numerical examples.

The literature contains a number of existing works on using MPC for quantum control.
In~\cite{humaloja2018linear}, MPC is used to control systems governed by the infinite-dimensional Schr\"odinger equation.
The papers~\cite{lee2024robust,lee2024model,lee2025time} propose time-optimal MPC schemes for uncertain quantum systems, where measurements are used systematically to influence the system.
Further,~\cite{clouatre2022model} develops an MPC scheme based on a learned Hamiltonian, which is used to compute open-loop optimal control inputs.
On the other hand, the work~\cite{goldschmidt2022model} focuses on the specific problem of quantum state preparation by using MPC as a closed-loop controller.
While~\cite{humaloja2018linear,lee2024robust,lee2024model,lee2025time} address fundamentally different control problems, the latter two works~\cite{clouatre2022model} and~\cite{goldschmidt2022model} are related to our results in that they use MPC for open-loop and closed-loop QOC, respectively.
The present paper unifies and extends these works in several aspects:
More general systems dynamics including open quantum systems;
MPC schemes with terminal constraints;
rigorous theoretical analysis;
analysis of the influence of numerical optimization techniques;
unifying formulation of MPC which is applicable to both open-loop and closed-loop QOC.

The remainder of the paper is structured as follows.
In Section~\ref{sec:preliminaries}, we introduce the considered system dynamics and the quantum control problem.
Next, in Section~\ref{sec:MPQC}, we present the basic MPQC scheme.
Sections~\ref{sec:MPQC_TEC} and~\ref{sec:MPQC_setpoint} contain more sophisticated MPQC schemes with terminal equality constraints and setpoint optimization, respectively.
In Section~\ref{sec:MPQC_opt}, we discuss the influence of the choice of numerical optimization method and, in Section~\ref{sec:numerical_results}, we provide numerical results.
Finally, Section~\ref{sec:discussion} concludes the paper.

\section{Preliminaries}\label{sec:preliminaries}

In this section, we introduce preliminaries on the quantum system dynamics (Section~\ref{subsec:preliminaries_dynamics}) and the optimal control objective (Section~\ref{subsec:preliminaries_optimal}).

\subsection{Quantum system dynamics}\label{subsec:preliminaries_dynamics}
We study quantum control problems from a unified bilinear control systems viewpoint.
To be precise, we consider systems of the form 
\begin{equation}\label{eq:sys_X}
\dot{X}(t) = -\Big( A_0 + \sum_{j=1}^{m} u_j(t) B_j \Big) X(t),
\end{equation}
where \( X(t)\) is the generalized state variable, \( u_j(t) \) are real-valued control amplitudes, and $A_0$, $B_j$, $j=1,\dots,m$ are the system parameters.
Such bilinear control systems include a large class of relevant quantum systems~\cite{machnes2011comparing,berberich2024bringing}.
For example, pure state evolution under the time-dependent Schrödinger equation with a fixed drift Hamiltonian \( H_0 \) and time-dependent control Hamiltonians \( H_j, \,j= 1, \dots, m\) takes the form

\begin{equation}\label{eq:sys_psi}
\ket{\dot{\psi}(t)} = -i \Big( H_0 + \sum_{j=1}^m u_j(t) H_j \Big) |\psi(t)\rangle.
\end{equation}

This is equivalent to~\eqref{eq:sys_X} with $A_0=iH_0$, $B_j=iH_j$, $X(t)=\ket{\psi(t)}$.
Likewise, gate synthesis problems can be treated via

\begin{equation}\label{eq:sys_U}
\dot{U}(t) = -i \Big( H_0 + \sum_{j=1}^m u_j(t) H_j \Big) U(t),
\end{equation}

which reduces to~\eqref{eq:sys_X} with $A_0=iH_0$, $B_j=iH_j$, $X(t)=U(t)$.
Open quantum systems with Lindblad dynamics can be handled analogously by using $X(t)$ to represent a (vectorized) density matrix or quantum operation, see~\cite{machnes2011comparing,berberich2024bringing} for details.

For easier numerical implementation, we discretize time into uniform intervals of duration \( \Delta t \) and assume the control input is constant on each interval.
This results in piecewise-constant control functions $u(\tau)=u_t$ for $\tau\in[\Delta t\cdot t,\Delta t\cdot t+\Delta t)$ and $t=0,\dots,N-1$.
Here, $N$ is the discrete-time horizon satisfying $N\Delta t=T$.
Thus, we obtain the following evolution of $X$ in discrete time
\begin{align}\label{eq:sys_X_discrete}
    X_{t+1}=e^{-A(u_t)\Delta t}X_t
\end{align}
with
\begin{equation}
    A(u_t) = A_0 + \sum_{j=1}^m u_{t,j} B_j,
\end{equation}
where $u_{t,j}$ is the $j$-th input component at time $t$.

\subsection{Optimal control objective}\label{subsec:preliminaries_optimal}

In this paper, we consider the following quantum control objective:
Steering the state $X_t$ to a desired target $X_{\mathrm{ref}}$ by maximizing the fidelity $\calF(X_N,X_{\mathrm{ref}})$ at final time $N$.
The fidelity is given, e.g., by
\begin{align}\label{eq:fidelity_pure_state}
\mathcal{F}_\psi(\ket{\psi_N},\ket{\psi_{\mathrm{ref}}}) := |\langle \psi_\text{ref} | \psi_N \rangle|^2,
\end{align}
or
\begin{align}\label{eq:fidelity_unitary}
\mathcal{F}_\rmU(U_N,U_{\mathrm{ref}}) = \frac{1}{d^2} \left| \mathrm{tr}\left( U_\text{ref}^\dagger U_N \right) \right|^2,
\end{align}
in the pure state transfer or gate synthesis setup, respectively.
Here, we write $d$ for the dimension of the underlying Hilbert space.
In either case, the QOC problem amounts to finding a control input sequence $\{u_t\}_{t=0}^{N-1}$ that maximizes the corresponding fidelity for the final state while satisfying the system dynamics~\eqref{eq:sys_X_discrete}.
At the same time, we want to minimize the control effort in order to cope with practical constraints, e.g., on bandwidth, amplitude, and noise.
Hence, we minimize a weighted combination of the infidelity and the control cost, represented by the \emph{stage cost}
\begin{align}\label{eq:stage_cost}
    \ell(X_t,u_t)=\alpha(1-\calF(X_t,X_{\mathrm{ref}}))+\lVert u_t-\bar{u}_{\mathrm{ref},t}\rVert_R^2
\end{align}
for a scalar weight $\alpha\geq0$ and with the notation $\lVert u\rVert_R^2=u^\top R u$ for some positive semidefinite weighting matrix $R$.
Here, $\bar{u}_{\mathrm{ref},t}$ is a (possibly time-varying) input reference as frequently considered in QOC~\cite{reich2012monotonically}.
In practice, it can be chosen to be zero.
On the other hand, as we will see later in the paper, it can be beneficial for theoretical guarantees to have a constant input reference $\bar{u}_{\mathrm{ref},t}=u_{\mathrm{ref}}$, $t=0,\dots,N-1$, which is chosen such that $X_{\mathrm{ref}}$ is an eigenstate of the dynamics matrix (for closed quantum systems, the controlled Hamiltonian) with input $u_{\mathrm{ref}}$, i.e., 
\begin{align}\label{eq:eigenstate_X_ref_u_ref}
    A(u_{\mathrm{ref}})X_{\mathrm{ref}}=\lambda X_{\mathrm{ref}}
\end{align}
for some $\lambda\in\bbr$.
Further, we define the terminal cost function
\begin{align}
    \Phi(X_N)=\beta(1-\calF(X_N,X_{\mathrm{ref}})),
\end{align}
which penalizes the deviation of the final state from the target with the scalar weight $\beta\geq0$.
Moreover, we consider control constraints of the form $u_t\in\bbu$ for a given set $\bbu\subseteq\bbr^m$.
They can be used, e.g., to encode actuator limits $\lVert u_t\rVert\leq u_{\max}$, but also more general constraint sets or even no constraints at all, i.e., $\bbu=\bbr^m$.

We want to solve the following QOC problem, which aims to steer the system~\eqref{eq:sys_X_discrete} from the initial state $X_0$ to the target state $X_{\mathrm{ref}}$.
\begin{align}\label{eq:QOC}
    \min_{\{u_t\}_{t=0}^{N-1}}
    \>\>&\sum_{t=0}^{N-1}\ell(X_t,u_t)+\Phi(X_N)\\
    \mathrm{s.t.}\quad&X_{t+1}=e^{-A(u_t)\Delta t}X_t,\\
    &u_t\in\bbu,\>\>t=0,\dots,N-1.
\end{align}
The optimization problem~\eqref{eq:QOC} constitutes a standard QOC problem for which a variety of numerical optimization techniques and analytical insights exist, see~\cite{koch2022quantum} for an overview.
In this paper, we provide a modular strategy for tackling~\eqref{eq:QOC} based on MPC~\cite{rawlings2020model}.
To be precise, we construct solutions to~\eqref{eq:QOC} by repeatedly solving the problem over smaller time horizons and only using the beginning piece of the optimal input, see Section~\ref{sec:MPQC} for details.
We show that the proposed MPQC framework can guarantee exponential convergence to the target state $X_{\mathrm{ref}}$ and, thereby, implicitly provides a performant solution to the original QOC problem~\eqref{eq:QOC}.
Moreover, we discuss efficiency and robustness benefits of MPQC in comparison to standard QOC for open-loop and closed-loop quantum control problems.

\section{Model predictive quantum control}\label{sec:MPQC}

MPC is a receding-horizon control strategy in which a finite-horizon optimal control problem is solved repeatedly. Starting from the current state $X_t$, MPC predicts and optimizes over the system evolution for a future horizon of length $L$ based on the discrete-time dynamics~\eqref{eq:sys_X_discrete}.
Importantly, the prediction horizon $L$ is typically (significantly) shorter than the final time $N$.

\begin{subequations}\label{eq:MPQC}
In the following, we introduce the basic MPQC scheme.
At each time $t=0,\dots,N-1$, given the current state $X_t$, we solve the following quantum optimal control problem
\begin{align}\label{eq:MPQC_cost}
    \min_{\{\bar{u}_k(t)\}_{k=0}^{L-1}}
    \>\>&\sum_{k=0}^{L-1}\ell(\bar{X}_k(t),\bar{u}_k(t))+\Phi(\bar{X}_L(t))\\\label{eq:MPQC_dynamics}
    \mathrm{s.t.}\quad&\bar{X}_{k+1}(t)=e^{-A(\bar{u}_k(t))\Delta t}\bar{X}_k(t),\\
    \label{eq:MPQC_initial_state}
    &\bar{X}_0(t)=X_t,\\\label{eq:MPQC_constraints}
    &\bar{u}_k(t)\in\bbu,\>\>k=0,\dots,L-1.
\end{align}
\end{subequations}
Here, $\bar{u}_k(t)$ and $\bar{X}_k(t)$ denote the $k$-th step of the predicted input and state, appearing in the optimization problem at time $t$.
On the other hand, $X_t$ and $u_t$ refer to the state and input of the controlled system~\eqref{eq:sys_X_discrete} at time $t$ under the proposed MPC controller. 
Thus, the constraint~\eqref{eq:MPQC_initial_state} initializes the internal predicted state in the optimization problem with the current state $X_t$.
In problem~\eqref{eq:MPQC}, both $\bar{u}(t)$ and $\bar{X}(t)$ are optimization variables, but we only explicitly highlight the minimization over the control input $\bar{u}(t)$ since it uniquely determines the state $\bar{X}(t)$ and is therefore the only free variable.
We write 
\begin{align}
    \bar{X}^*(t)=\{\bar{X}_k^*\}_{k=0}^{L}\>\>\text{and}\>\>\bar{u}^*(t)=\{\bar{u}_k^*(t)\}_{k=0}^{L-1}
\end{align}
for the optimal solution of the optimization problem~\eqref{eq:MPQC} at time $t$.
In MPQC, once the optimal control sequence \(\bar{u}^*(t)\) is obtained, only the first $M$ steps of the optimal input are used for the QOC input sequence.
The process is repeated after $M$ time steps with updated state information, see Algorithm~\ref{alg:MPQC} for the detailed scheme.

\noindent\hrulefill

\begin{algorithm}[H]
\begin{Algorithm}\label{alg:MPQC}
\normalfont{\textbf{Model predictive quantum control}}\\
\textbf{Initialize} $t=0$ and \textbf{iterate:}
\begin{enumerate}
\item Based on the state $X_t$, solve~\eqref{eq:MPQC}.
\item Apply the first $M$ steps of the optimal input $u_{t+i}=\bar{u}_i^*(t)$, $i=0,\dots,M-1$, to the system~\eqref{eq:sys_X_discrete}.
\item Set $t=t+M$ and go back to Step 1.
\end{enumerate}
\end{Algorithm}
\end{algorithm}

\noindent\hrulefill

Algorithm~\ref{alg:MPQC} yields a control input sequence $\{u_t\}_{t=0}^{N-1}$ for the QOC problem~\eqref{eq:QOC} when stopping the iteration as soon as $t\geq N$.
It tackles the QOC problem~\eqref{eq:QOC} with horizon $N$ by repeatedly solving the smaller QOC problem~\eqref{eq:MPQC} with horizon $L$.
Note that, except for the different time horizons, the two problems are identical.
In this sense, the proposed framework is fully modular since we solve the original QOC problem~\eqref{eq:QOC} via a sequence of smaller QOC problems.
For the latter, existing numerical optimization techniques for QOC problems can be used, see~\cite{koch2022quantum}, but also more general nonlinear optimization methods.
In Section~\ref{sec:MPQC_opt}, we discuss the impact of the choice of optimization technique in more detail, in particular the theoretical guarantees when applying off-the-shelf QOC techniques such as GRAPE~\cite{khaneja2005optimal}, Krotov's method~\cite{reich2012monotonically}, or CRAB optimization~\cite{caneva2011chopped,doria2011optimal,mueller2022one}.

We distinguish between two application scenarios for the MPQC approach in Algorithm~\ref{alg:MPQC} as well as the other MPQC schemes introduced in Sections~\ref{sec:MPQC_opt} and~\ref{sec:MPQC_setpoint} below:
open-loop and closed-loop optimal control.
In open-loop MPQC, Algorithm~\ref{alg:MPQC} is used in an offline fashion to generate an input trajectory $\{u_t\}_{t=0}^{N-1}$ of length $N$ which is applied to the quantum systems without taking any measurements into account.
In this case, the state $X_t$ in Step 1.\ of Algorithm~\ref{alg:MPQC} is obtained by simulating the discrete-time dynamics~\eqref{eq:sys_X_discrete}.
Thereby, Algorithm~\ref{alg:MPQC} produces a (possibly suboptimal) candidate solution for the original QOC problem~\eqref{eq:QOC}.
In Section~\ref{sec:numerical_results}, we show with example systems that MPQC can be significantly more efficient than solving the problem~\eqref{eq:QOC} directly while keeping the performance at a comparable level.

In closed-loop MPQC, Algorithm~\ref{alg:MPQC} computes an input sequence $\{u_t\}_{t=0}^{N-1}$ by using state measurements of $X_t$ from the quantum system~\eqref{eq:sys_X_discrete} in Step 1.\ of Algorithm~\ref{alg:MPQC}.
A measurement of the state $X_t$ at time $t$ can be realized experimentally by repeatedly resetting the quantum system to the initial state $X_0$, applying the fixed parts of the control input $\{u_j\}_{j=0}^{t-1}$, and using measurements to construct the state $X_t$ via state tomography~\cite{cramer2010efficient}.
In this case, MPQC becomes a feedback-based quantum control method which combines the benefits of QOC with an increased robustness due to feedback.
In particular, superior performance can be obtained in the presence of model mismatch or noise.
Related approaches were proposed in the recent literature, e.g., GRAPE with feedback~\cite{porotti2023gradient} or quantum feedback control via reinforcement learning~\cite{guatto2024improving}, both of which can substantially enhance robustness but do not admit rigorous theoretical guarantees.

Finally, the value of $M$ is a user-chosen design parameter.
In classical MPC, $M=1$ is a common choice.
Larger values of $M$ reduce the computational effort because the optimization problem is solved less frequently, but also deteriorate the robustness of closed-loop implementations since fewer measurements are taken.

\section{Theoretical guarantees of MPQC via terminal equality constraints}\label{sec:MPQC_TEC}

The MPQC scheme presented in Section~\ref{sec:MPQC} relies on a basic MPC formulation~\cite{rawlings2020model}.
Although this MPC formulation often works well in practice, it does not admit theoretical guarantees in general and can even cause unstable (diverging) trajectories for the controlled system~\cite{raff2006nonlinear}.
There are two main approaches for enhancing MPC schemes such that they admit theoretical guarantees:
1) choosing a sufficiently long prediction horizon $L$, see~\cite{gruene2012nmpc} for explicit lower bounds,
and 2) adding terminal constraints on the final state $\bar{X}_L(t)$ to the optimization problem~\eqref{eq:sys_X_discrete}~\cite{rawlings2020model}.

In the remainder of the paper, we provide more sophisticated MPQC schemes which do admit theoretical guarantees and can, therefore, admit superior practical performance even with shorter prediction horizons $L$.
In particular, in the present section, we introduce an MPQC scheme with terminal equality constraints, which constitutes the simplest possibility for achieving theoretical guarantees in MPC~\cite{rawlings2020model}.
In Section~\ref{sec:MPQC_setpoint}, we present a scheme with setpoint optimization to improve the practical performance and reduce the computational complexity while keeping theoretical guarantees.

\begin{subequations}\label{eq:MPQC_TEC}
The MPQC scheme with terminal equality constraints is defined as follows:
At each time $t=0,\dots,N-1$, given the current state $X_t$, we solve the following optimal control problem
\begin{align}
\label{eq:MPQC_TEC_cost}
    \min_{\{\bar{u}_k(t)\}_{k=0}^{L-1}}
    \>\>&\sum_{k=0}^{L-1}\ell(\bar{X}_k(t),\bar{u}_k(t))\\
    \mathrm{s.t.}\quad&\bar{X}_{k+1}(t)=e^{-A(\bar{u}_k(t))\Delta t}\bar{X}_k(t),\\
    &\bar{u}_k(t)\in\bbu,\>\>k=0,\dots,L-1,\\\label{eq:MPQC_TEC_TEC}
    &\calF(\bar{X}_L(t),X_{\mathrm{ref}})=1.
\end{align}
\end{subequations}
In comparison to~\eqref{eq:MPQC}, the optimization problem~\eqref{eq:MPQC_TEC} contains the additional constraint~\eqref{eq:MPQC_TEC_TEC} which ensures that the fidelity between the final state $\bar{X}_L(t)$ and the target $X_{\mathrm{ref}}$ is one.
This implies that both are equal (modulo an unimportant global phase).
Note that we dropped the terminal cost function $\Phi(\bar{X}_L(t))$ from the cost~\eqref{eq:MPQC_TEC_cost} since it is zero due to~\eqref{eq:MPQC_TEC_TEC}.
We write $J(X_t,\bar{u}(t))$ for the cost of~\eqref{eq:MPQC_TEC} for a given input candidate $\bar{u}(t)=\{\bar{u}_k(t)\}_{k=0}^{N-1}$, and $J^*(X_t)$ for the optimal cost, i.e., $J^*(X_t)=J(X_t,\bar{u}^*(t))$ with the optimal control input $\bar{u}^*(t))$.
The optimization problem~\eqref{eq:MPQC_TEC} is used to generate an input for the quantum system~\eqref{eq:sys_X_discrete} analogous to Algorithm~\ref{alg:MPQC}, i.e., by solving the problem at time $t$ and only applying the beginning piece of the optimal input.
It is applicable in both open-loop and closed-loop implementations, compare the discussion in Section~\ref{sec:MPQC}.

We now provide a theoretical result for the performance under the MPQC scheme with terminal equality constraints.
In particular, we state an exponential stability property for the controlled system, i.e., the controlled state trajectory converges exponentially to the target $X_{\mathrm{ref}}$.
To this end, we introduce a norm $\lVert X\rVert$ for the state, which we assume to be bounded by the fidelity $\calF$.
This assumption, together with additional technical assumptions, is detailed in the following.
\begin{assumption}\label{ass:TEC}
    \begin{enumerate}
        \item 
It holds that
\begin{align}\label{eq:fidelity_norm_equivalence}
    \lVert X_1-X_2\rVert^2\leq c_1(1-\calF(X_1,X_2))
\end{align}
for some $c_1>0$ and any $X_1$, $X_2$.

        \item The input constraint set $\bbu$ is compact.
        
        \item The stage cost~\eqref{eq:stage_cost} is positive definite in the state, i.e., it holds that $\alpha>0$.
        
        \item The input reference is constant $\bar{u}_{\mathrm{ref},t}=u_{\mathrm{ref}}$, $t=0,\dots,N-1$, and it satisfies~\eqref{eq:eigenstate_X_ref_u_ref}.

        \item The optimal cost of the MPQC problem~\eqref{eq:MPQC_TEC} satisfies
        \begin{align}\label{eq:MPQC_TEC_cost_upper_bound}
            J^*(X)\leq c_{\rmu}\lVert X-X_{\mathrm{ref}}\rVert^2
        \end{align}
        with some $c_{\rmu}>0$ and for all $X$ for which~\eqref{eq:MPQC_TEC} is feasible.
    \end{enumerate}
\end{assumption}

The first assumption involving the inequality~\eqref{eq:fidelity_norm_equivalence} is satisfied for most relevant QOC scenarios, e.g., for pure states with the diamond norm and the fidelity~\eqref{eq:fidelity_pure_state}, for unitaries with the Frobenius norm and the fidelity~\eqref{eq:fidelity_unitary}, but also for mixed states and quantum operations with suitably defined norms and fidelities, compare~\cite{gilchrist2005distance}.

The second assumption on compactness of $\bbu$ is not restrictive since experimental inputs are typically bounded.
The third assumption on $\alpha>0$ can be easily ensured since $\alpha$ is a user-chosen parameter, compare~\eqref{eq:stage_cost}.
The fourth assumption involving~\eqref{eq:eigenstate_X_ref_u_ref} reduces our analysis to target states $X_{\mathrm{ref}}$ which are eigenstates of the controlled Hamiltonian for some control input $u_{\mathrm{ref}}$.
The assumption is necessary for obtaining rigorous guarantees via standard MPC arguments, which are asymptotic by nature and therefore require that the system can be kept at the target state.
Note that the target input $u_{\mathrm{ref}}$ can be computed for a given target state $X_{\mathrm{ref}}$ based on~\eqref{eq:eigenstate_X_ref_u_ref} since this equation is linear in the input.
Finally, the fifth assumption is also common in the MPC literature and is connected to controllability since it requires the ability to steer the system to $X_{\mathrm{ref}}$ with suitably bounded cost~\cite{rawlings2020model}.
We impose a quadratic upper bound in~\eqref{eq:MPQC_TEC_cost_upper_bound} to derive exponential stability in Theorem~\ref{thm:MPQC_TEC}, but we note that it can be relaxed to a more general continuous function, in which case the analysis below implies asymptotic stability at a potentially non-exponential rate.
For a detailed treatment of controllability properties of quantum systems, we refer to~\cite{dalessandro2021quantum}.

\begin{theorem}\label{thm:MPQC_TEC}
    Suppose Assumption~\ref{ass:TEC} holds, $M=1$, and the MPQC problem~\eqref{eq:MPQC_TEC} is feasible at time $t=0$.
Consider the system~\eqref{eq:sys_X_discrete} controlled via Algorithm~\ref{alg:MPQC}.

The target state $X_{\mathrm{ref}}$ is exponentially stable for the controlled system, i.e., there exist constants $C>0$, $0<\gamma<1$ such that, for any $t\geq0$,
\begin{align}\label{eq:thm_MPQC_TEC}
    1-\calF(X_t,X_{\mathrm{ref}})\leq C\gamma^t(1-\calF(X_0,X_{\mathrm{ref}})).
\end{align}
\end{theorem}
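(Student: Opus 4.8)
The plan is to run the classical MPC stability argument for terminal equality constraints in the quantum setting, using the optimal cost $J^*(X)$ of~\eqref{eq:MPQC_TEC} as a Lyapunov function and $1-\calF(X,X_{\mathrm{ref}})$ as the distance to the target that decays geometrically. The three ingredients are recursive feasibility, a value-function decrease along the closed loop, and two-sided bounds relating $J^*$ to $1-\calF(\cdot,X_{\mathrm{ref}})$.

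First I would establish recursive feasibility via a shifted input candidate. Suppose~\eqref{eq:MPQC_TEC} is feasible at time $t$ with optimizer $\bar u^*(t)$ and terminal state $\bar X_L^*(t)$ satisfying $\calF(\bar X_L^*(t),X_{\mathrm{ref}})=1$. Since $M=1$, the closed-loop state moves to $X_{t+1}=\bar X_1^*(t)$, and I take the candidate $\tilde u(t+1)=(\bar u_1^*(t),\dots,\bar u_{L-1}^*(t),u_{\mathrm{ref}})$: its first $L-1$ steps reproduce the tail of the previous prediction, landing at $\bar X_L^*(t)$, and its last step applies $u_{\mathrm{ref}}$. The eigenstate relation~\eqref{eq:eigenstate_X_ref_u_ref} makes $e^{-A(u_{\mathrm{ref}})\Delta t}X_{\mathrm{ref}}$ equal to $X_{\mathrm{ref}}$ up to a global phase, so the candidate terminal state again satisfies $\calF(\cdot,X_{\mathrm{ref}})=1$, and the input constraints hold because $\bar u_k^*(t)\in\bbu$ and $u_{\mathrm{ref}}\in\bbu$. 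Induction yields feasibility for all $t\geq0$, and compactness of $\bbu$ ensures the minimum in~\eqref{eq:MPQC_TEC} is attained at every step.

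Next I would use this candidate for the decrease estimate. Because the reference input incurs zero stage cost at the target (the terminal constraint gives $\calF(\bar X_L^*(t),X_{\mathrm{ref}})=1$ and the constant input reference gives $\bar u_{\mathrm{ref},L-1}=u_{\mathrm{ref}}$, so $\ell(\bar X_L^*(t),u_{\mathrm{ref}})=0$), telescoping the candidate cost yields $J(X_{t+1},\tilde u(t+1))=J^*(X_t)-\ell(X_t,u_t)$, and optimality at $t+1$ gives $J^*(X_{t+1})\leq J^*(X_t)-\ell(X_t,u_t)$. Combining $\ell(X_t,u_t)\geq\alpha(1-\calF(X_t,X_{\mathrm{ref}}))$ with the chain $J^*(X_t)\leq c_{\rmu}\lVert X_t-X_{\mathrm{ref}}\rVert^2\leq c_{\rmu}c_1(1-\calF(X_t,X_{\mathrm{ref}}))$ obtained from~\eqref{eq:MPQC_TEC_cost_upper_bound} and~\eqref{eq:fidelity_norm_equivalence} then gives the contraction $J^*(X_{t+1})\leq\gamma J^*(X_t)$ with $\gamma=1-\alpha/(c_{\rmu}c_1)$; here $\gamma<1$ because $\alpha>0$, and $\gamma>0$ after enlarging $c_{\rmu}$ if necessary (which leaves~\eqref{eq:MPQC_TEC_cost_upper_bound} valid). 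Iterating, $J^*(X_t)\leq\gamma^tJ^*(X_0)$.

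Finally I would translate back to fidelity: the lower bound $J^*(X_t)\geq\ell(\bar X_0^*(t),\bar u_0^*(t))\geq\alpha(1-\calF(X_t,X_{\mathrm{ref}}))$ together with $J^*(X_0)\leq c_{\rmu}c_1(1-\calF(X_0,X_{\mathrm{ref}}))$ and $J^*(X_t)\leq\gamma^tJ^*(X_0)$ yields~\eqref{eq:thm_MPQC_TEC} with $C=c_{\rmu}c_1/\alpha$. I expect the main obstacle to be the bookkeeping at the terminal step: verifying that~\eqref{eq:eigenstate_X_ref_u_ref} genuinely renders $X_{\mathrm{ref}}$ a fixed point of the discrete-time map~\eqref{eq:sys_X_discrete} modulo global phase for all relevant fidelity definitions, so that appending $u_{\mathrm{ref}}$ keeps the shifted candidate feasible \emph{and} incurs zero terminal stage cost; once this is secured, the remaining steps reduce to the routine telescoping and combination of the assumed bounds.
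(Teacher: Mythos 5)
Your proposal is correct and follows essentially the same route as the paper: optimal cost as Lyapunov function, the decrease bound $J^*(X_{t+1})\leq J^*(X_t)-\ell(X_t,u_t)$, the chain $J^*(X_t)\leq c_{\rmu}c_1(1-\calF(X_t,X_{\mathrm{ref}}))$ from Assumptions 1 and 5, and the resulting contraction with $\gamma=1-\alpha/(c_{\rmu}c_1)$ and $C=c_{\rmu}c_1/\alpha$. The only difference is that you spell out the shifted-candidate argument for recursive feasibility (including the role of~\eqref{eq:eigenstate_X_ref_u_ref} and the global phase), which the paper delegates to the classical MPC literature; your remark on enlarging $c_{\rmu}$ to secure $\gamma>0$ is a small detail the paper leaves implicit.
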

\begin{proof}
Using classical MPC arguments~\cite{rawlings2020model}, one can show that the optimization problem~\eqref{eq:MPQC_TEC} is feasible at any time $t\geq0$ when it is feasible at initial time $t=0$.
In particular, one obtains the bound
\begin{align}\label{eq:thm_MPQC_TEC_proof1}
    J^*(X_{t+1})-J^*(X_t)\leq-\ell(X_t,u_t)
\end{align}
for any $t=0,\dots,N-1$, see~\cite{rawlings2020model} for details.
The definition of the stage cost in~\eqref{eq:stage_cost} implies
\begin{align}\label{eq:thm_MPQC_TEC_proof1b}
    J^*(X_{t+1})-J^*(X_t)\leq-\alpha(1-\calF(X_t,X_{\mathrm{ref}})).
\end{align}
Combining~\eqref{eq:fidelity_norm_equivalence} and~\eqref{eq:MPQC_TEC_cost_upper_bound}, we obtain 
\begin{align}\label{eq:thm_MPQC_TEC_proof2}
    J^*(X_t)\leq c_\rmu c_1(1-\calF(X_t,X_{\mathrm{ref}}).
\end{align}
Plugging this into~\eqref{eq:thm_MPQC_TEC_proof1b}, we infer 
\begin{align}
    J^*(X_{t+1})\leq\underbrace{(1-\frac{\alpha}{c_{\rmu}c_1})}_{\gamma=}J^*(X_t).
\end{align}
Hence, we have 
\begin{align}\label{eq:thm_MPQC_TEC_proof3}
    J^*(X_t)\leq\gamma^t J^*(X_0).
\end{align}
Note that $0<\gamma<1$.
By definition of the cost~\eqref{eq:MPQC_TEC_cost}, we have $\alpha(1-\calF(X_t,X_{\mathrm{ref}})\leq J^*(X_t)$.
This implies 
\begin{align}
    \alpha(1-\calF(X_t,X_{\mathrm{ref}})\leq &J^*(X_t)\stackrel{\eqref{eq:thm_MPQC_TEC_proof3}}{\leq} \gamma^tJ^*(X_0)\\\nonumber 
    \stackrel{\eqref{eq:thm_MPQC_TEC_proof2}}{\leq}
    &c_\rmu c_1\gamma^t(1-\calF(X_0,X_{\mathrm{ref}})
\end{align}
for any $t=0,\dots,N-1$, which proves~\eqref{eq:thm_MPQC_TEC} with $C=\frac{c_\rmu c_1}{\alpha}$.
\end{proof}

Theorem~\ref{thm:MPQC_TEC} states that the infidelity between $X_t$ and $X_{\mathrm{ref}}$ decays exponentially and, therefore, $X_t$ converges exponentially to $X_{\mathrm{ref}}$ under the MPQC scheme with terminal equality constraints.
Thus, MPQC with terminal equality constraints yields a solution of the QOC problem~\eqref{eq:QOC} based on the repeated solution of smaller QOC problems with theoretical guarantees.
The proof of Theorem~\ref{thm:MPQC_TEC} follows elementary MPC arguments with the main challenge of using the fidelity-based stage cost $\ell$, which is not standard in the classical MPC literature.
The result assumes $M=1$ for simplicity, i.e., the optimization problem~\eqref{eq:MPQC_TEC} is solved at each time step $t=0,\dots,N-1$, but analogous results can be derived for $M>1$.
Beyond guaranteeing exponential stability under the MPQC scheme~\eqref{eq:MPQC_TEC}, the approach introduced in the present section provides the basis for the more advanced MPQC formulation presented in Section~\ref{sec:MPQC_setpoint}.

\section{MPQC with setpoint optimization}\label{sec:MPQC_setpoint}

While MPQC with terminal equality constraints as in Section~\ref{sec:MPQC_TEC} guarantees exponential stability, it admits several drawbacks.
Due to the terminal constraint~\eqref{eq:MPQC_TEC_TEC}, the optimization problem is only feasible at $t=0$ when the system can be steered from the initial state $X_0$ to the target state $X_{\mathrm{ref}}$ within $L$ steps.
This is only possible for initial states $X_0$ close to $X_{\mathrm{ref}}$ or for sufficiently large prediction horizons $L$.
Moreover, an input $u_{\mathrm{ref}}$ needs to be available for which $X_{\mathrm{ref}}$ is an eigenstate of the corresponding matrix $A(u_{\mathrm{ref}})$, compare~\eqref{eq:eigenstate_X_ref_u_ref}.
Further, MPC schemes with terminal equality constraints can admit poor robustness and performance.

In the following, we present an alternative, more advanced MPQC scheme which overcomes these drawbacks via setpoint optimization.
The approach relies on the MPC for tracking framework~\cite{limon2008mpc,limon2018nonlinear,koehler2020nonlinear}.
The key idea is to relax the terminal equality constraint~\eqref{eq:MPQC_TEC} by making the target setpoint an optimization variable, which can be an arbitrary steady-state for the system dynamics~\eqref{eq:sys_X_discrete}.
The difference between this \emph{artificial} setpoint and the actual \emph{target} setpoint $(X_{\mathrm{ref}},u_{\mathrm{ref}})$ is then penalized in the cost to ensure that the controlled system converges to the target setpoint.

\begin{subequations}\label{eq:MPQC_tracking}
We now introduce the MPQC scheme with setpoint optimization.
At each time $t=0,\dots,N-1$, given the current state $X_t$, we solve the following optimal control problem
\begin{align}\nonumber
    \min_{
    \substack{\{\bar{u}_k(t)\}_{k=0}^{L-1}\\X^\rms(t),u^\rms(t)}
    }
    \>\>&\sum_{k=0}^{L-1}
    \alpha(1-\calF(X_t,X^\rms(t)))
    +\lVert u_t-u^\rms(t)\rVert_R^2\\
\label{eq:MPQC_tracking_cost}
    &+\eta(1-\calF(X^\rms(t),X_{\mathrm{ref}}))
    +\lVert u^\rms(t)-u_{\mathrm{ref}}\rVert_S^2\\
    \mathrm{s.t.}\quad&\bar{X}_{k+1}(t)=e^{-A(\bar{u}_k(t))\Delta t}\bar{X}_k(t),\\
    &\bar{u}_k(t)\in\bbu,\>\>k=0,\dots,L-1,\\\label{eq:MPQC_tracking_TEC}
    &\calF(\bar{X}_L(t),X^\rms(t))=1,\\\label{eq:MPQC_tracking_steady_state}
    &(X^\rms(t),u^\rms(t))\in\bbs.
\end{align}
\end{subequations}

The main difference to the optimization problem~\eqref{eq:MPQC_TEC} from Section~\ref{sec:MPQC_TEC} is the introduction of the artificial setpoint $(X^\rms(t),u^\rms(t))$, which is an optimization variable and therefore depends on the time step $t$ at which the optimization problem~\eqref{eq:MPQC_tracking} is solved.
The terminal equality constraint~\eqref{eq:MPQC_tracking_TEC} is now taken w.r.t.\ this artificial setpoint.
On the other hand, the cost~\eqref{eq:MPQC_tracking_cost} penalizes the distance from $(X^\rms(t),u^\rms(t))$ to the actual target setpoint $(X_{\mathrm{ref}},u_{\mathrm{ref}})$ with weighting parameter $\eta>0$ and a positive semidefinite matrix $S$.

The set $\bbs$ in~\eqref{eq:MPQC_tracking_steady_state} is the steady-state manifold of the system~\eqref{eq:sys_X_discrete}, which is defined as 
\begin{align}\label{eq:steady_state_manifold}
    \bbs=\{(X^\rms,u^\rms)|X^\rms=e^{-A(u^\rms)}X^\rms\}.
\end{align}
Hence, the constraint~\eqref{eq:MPQC_tracking_steady_state} implies that $X^\rms(t)$ is a steady-state for the controlled system with control input $u^\rms(t)$.
The optimization problem~\eqref{eq:MPQC_tracking} is used for constructing a QOC input sequence $\{u_t\}_{t=0}^{N-1}$ solving problem~\eqref{eq:QOC} in an MPC fashion precisely as in Sections~\ref{sec:MPQC} and~\ref{sec:MPQC_TEC}.
To be precise, we solve the optimization problem with initial state $X_t$, store the first $M$ steps of the optimal control input, and repeat, compare Algorithm~\ref{alg:MPQC}.

MPQC with setpoint optimization (i.e., the MPQC scheme based on problem~\eqref{eq:MPQC_tracking}) has several important advantages over the approach from Section~\ref{sec:MPQC_TEC}.
Due to the optimization over the artificial setpoint $(X^\rms(t),u^\rms(t))$ in~\eqref{eq:MPQC_tracking}, one can typically use significantly smaller values for the prediction horizon $L$.
This can lead to a substantial computational speedup.
Moreover, note that we allow $S=0$, in which case the MPQC problem is independent of $u_{\mathrm{ref}}$.
This brings the practical advantage that one only has to provide a target state $X_{\mathrm{ref}}$ without having to compute a corresponding target input.

The MPC for tracking approach, on which the MPQC scheme in the present section relies, admits a solid theoretical foundation.
Under suitable assumptions on the system dynamics and the cost and constraint parameters, one can prove desirable theoretical properties for the controlled system such as exponential stability, see~\cite{krupa2024model} for a recent introduction.
Most of the arguments carry over directly from the classical to the quantum setting.
These results typically assume convexity of the set $\bbs$ in~\eqref{eq:steady_state_manifold}, which holds for mixed states.
On the other hand, it is violated for pure states or unitary operators, in which case appropriate modifications need to be taken to account for the non-convex steady-state manifold, see~\cite{soloperto2022nonlinear}.

\section{Numerical optimization techniques}\label{sec:MPQC_opt}

Applying the MPQC schemes from Sections~\ref{sec:MPQC}, \ref{sec:MPQC_TEC}, and~\ref{sec:MPQC_setpoint} requires the repeated solution of QOC problems with a shorter horizon.
In the following, we discuss different numerical optimization techniques which can be used to this end, as well as their role in our theoretical analysis.

In general, the MPQC optimization problems~\eqref{eq:MPQC}, \eqref{eq:MPQC_TEC}, and~\eqref{eq:MPQC_tracking} are nonlinear optimization problems, for which standard methods can be used, e.g., the CasADi framework~\cite{andersson2019casadi} along with solvers based on sequential quadratic programming~\cite{boggs1995sequential} or interior point methods~\cite{waechter2005implementation}.
There also exist tailored numerical optimization techniques for MPC~\cite{diehl2005nominal,diehl2009efficient,liao2020time,zanelli2021lyapunov}, which equally apply in the present MPQC setup.

On the other hand, one can also resort to dedicated numerical optimization techniques for QOC.
In particular, the computational challenges of QOC have led to the development of tailored solution techniques such as GRAPE~\cite{khaneja2005optimal}, Krotov's method~\cite{reich2012monotonically}, and CRAB optimization~\cite{caneva2011chopped,doria2011optimal,mueller2022one}.
The key benefit of the proposed framework is that it is completely modular in this respect.
Our approach relies on breaking down the (possibly large) QOC problem~\eqref{eq:QOC} into a sequence of QOC problems with smaller time horizon ($L$ instead of $N$), which can be tackled using common techniques from classical optimal control or QOC.
This means that our framework does not only exploit the rich existing literature on numerical methods for nonlinear optimization, classical optimal control, and QOC, but also that future advances on solving QOC problems more efficiently will be applicable to improve efficiency in our framework as well.
This includes, for example, the recent QOC algorithms based on geodesic pulse engineering~\cite{lewis2025quantum}, polynomial optimization~\cite{gaggioli2025unitary}, and low-rank models~\cite{goutte2025low}.

However, the theoretical guarantees under MPQC may depend on the choice of solution method.
Classical numerical optimization techniques are very flexible in terms of the stage cost, the terminal cost, constraints, and additional decision variables such as artificial setpoints.
As a result, it is straightforward to implement the MPQC schemes from Sections~\ref{sec:MPQC_TEC} and~\ref{sec:MPQC_setpoint} such that, e.g., Theorem~\ref{thm:MPQC_TEC} can be used to guarantee stability and convergence properties.
On the other hand, common QOC approaches are partially less flexible and typically cannot be directly used to implement the optimization problems from Sections~\ref{sec:MPQC_TEC} and~\ref{sec:MPQC_setpoint}.
In the following, we comment on the theoretical guarantees that are provided when using common QOC methods to solve the optimization problems arising in MPQC.

We begin by discussing GRAPE~\cite{khaneja2005optimal}, which is a numerical optimization technique for QOC problems of the form~\eqref{eq:QOC}.
The most basic form of GRAPE~\cite{khaneja2005optimal} does not consider a stage cost or input constraints, i.e., it solves the problem~\eqref{eq:QOC} with $\ell(X_t,u_t)=0$ and $\bbu=\bbr^m$.
With simple modifications, GRAPE can solve QOC problems including a stage cost~\cite{fauquenot2025open} as well as non-trivial input constraints (e.g., via projection).
However, GRAPE does not realize a terminal equality constraint as in~\eqref{eq:MPQC_TEC_TEC}, which is required to derive theoretical guarantees in Section~\ref{sec:MPQC_TEC}.
Nevertheless, the existing MPC literature contains a variety of more sophisticated stability results which are still applicable in this scenario.
In the following, we discuss three possible approaches.

For the first approach, we assume that any state $X$ can be made a steady-state, i.e., for any $X^\rms$ there exists $u^\rms$ such that 
\begin{align}
    X^\rms=e^{-A(u^\rms)}X^\rms.
\end{align}
Then, the optimization problem~\eqref{eq:MPQC_tracking} can be simplified to the form~\eqref{eq:MPQC}, which is amenable to GRAPE.
In particular, setting $S=0$, one can drop the constraint~\eqref{eq:MPQC_tracking_steady_state}.
Moreover, the term 
\begin{align}
    \eta(1-\calF(X^\rms(t),X_{\mathrm{ref}}))=\eta(1-\calF(\bar{X}_L(t),X_{\mathrm{ref}}))
\end{align}
then becomes a standard terminal cost as in~\eqref{eq:QOC}.
Thus, the resulting optimization problem can indeed be solved via GRAPE, and it relies on a rigorous theoretical foundation, compare the discussion in Section~\ref{sec:MPQC_setpoint}.
Second, the work~\cite{limon2006stability} shows that stability of MPC can be guaranteed with a terminal cost and without terminal constraints such as~\eqref{eq:MPQC_TEC_TEC}, i.e., in a setup amenable to GRAPE, under suitable assumptions on the terminal cost.
In particular, the terminal cost is required to be a Lyapunov function for the controlled system.
Deriving terminal cost functions for MPQC, e.g., based on quantum Lyapunov control~\cite{cong2013survey}, is an interesting issue for future research.
Third, MPC without terminal constraints guarantees stability of the controlled system assuming that the prediction horizon $L$ is sufficiently long and that suitable controllability properties hold~\cite{gruene2012nmpc}.

Next, we discuss Krotov's method~\cite{reich2012monotonically}, which is directly applicable to the basic QOC problem~\eqref{eq:QOC} and, hence, the MPQC problem~\eqref{eq:MPQC} from Section~\ref{sec:MPQC}.
It should be emphasized that common formulations of Krotov's method include time-dependent cost weights, which can be tuned to encourage physically reasonable control shapes.
On the other hand, the MPQC formulations in Sections~\ref{sec:MPQC}--\ref{sec:MPQC_setpoint} as well as the theoretical result in Theorem~\ref{thm:MPQC_TEC} only allow for time-independent weights $\alpha$ and $R$, compare~\eqref{eq:stage_cost}.
Further, Krotov's method does not enforce a terminal equality constraint such that the theoretical results from Section~\ref{sec:MPQC_TEC} do not directly apply.
Instead, more sophisticated MPC arguments need to be employed to prove theoretical guarantees, compare the discussion for GRAPE above.

Finally, we address CRAB optimization~\cite{caneva2011chopped,doria2011optimal,mueller2022one}, where the control input is parameterized via basis functions whose coefficients can be optimized using existing solvers.
Analogous to GRAPE, the basic CRAB formulation includes neither input constraints nor a stage cost, but an extension to addressing both is straightforward.
Thus, CRAB can be used to solve the basic QOC problem~\eqref{eq:QOC}.
As for GRAPE and Krotov's method, terminal equality constraints (Section~\ref{sec:MPQC_TEC}) cannot be directly handled using CRAB, which requires more sophisticated tools for a rigorous theoretical analysis, see above.

In summary, GRAPE, Krotov's method, and CRAB optimization can all be used to solve the basic QOC problem~\eqref{eq:QOC} and, hence, to implement the MPQC scheme explained in Section~\ref{sec:MPQC}.
However, contrary to generic nonlinear optimization solvers, none of them can handle terminal equality constraints (Section~\ref{sec:MPQC_TEC}) or artificial setpoints (Section~\ref{sec:MPQC_setpoint}).
Hence, deriving rigorous theoretical guarantees in this case requires more sophisticated arguments from MPC theory and is an interesting issue for future research.

\section{Numerical results}\label{sec:numerical_results}

In this section, we apply the developed MPQC framework, compare it to alternative QOC approaches, and study the influence of different optimization techniques.
In Section~\ref{subsec:investigation_of_the_basic_mpqc_scheme}, we apply the basic MPQC scheme explained in Section~\ref{sec:MPQC} to a single-qubit state transfer problem.
Next, we present results for the advanced MPQC formulations with terminal equality constraints (Section~\ref{subsec:importance_of_terminal_equality_constraint}) and setpoint optimization (Section~\ref{subsec:tracking_mpc}).
Further, in Section~\ref{subsec:numerical_implementation_comparison}, we compare the proposed MPQC approach to an existing QOC technique.
Finally, in Section~\ref{subsec:robustness_of_mpqc}, we demonstrate the improved robustness when using MPQC for closed-loop quantum control.
In Appendix~\ref{app:numerical_implementation}, we discuss details on the numerical implementation, in particular on nonlinear optimization with CasADi~\cite{andersson2019casadi}.

\subsection{Basic MPQC Scheme}
\label{subsec:investigation_of_the_basic_mpqc_scheme}

In the following, we apply the basic MPQC scheme presented in Section~\ref{sec:MPQC} when using different numerical optimization techniques for solving the optimization problem~\eqref{eq:MPQC}.
We consider the task of preparing a desired pure quantum state $\ket{\psi_{\mathrm{ref}}}$ from a known initial state for a two-level quantum system with Hamiltonian
\begin{align}
    H(t) = H_0 + u_1(t)\,H_1 + u_2(t)\,H_2 + u_3(t)\,H_3.
\end{align}
Here, $H_0 = \omega \sigma_z = -0,5 \sigma_z$ represents the drift Hamiltonian, and 
$H_1 = \sigma_x$, $H_2 = \sigma_y$, $H_3 = \sigma_z$ are the control Hamiltonians. 
The control amplitudes are subject to box constraints $u_j(t)\in[-1,1]$ for $j=1,2,3$. 
We set the initial state of the qubit to 
    $\ket{\psi_0} = \ket{0}$.
    The above Hamiltonian takes the form
\begin{align}
    H(t) = u_1(t)\sigma_x + u_2(t)\sigma_y + (\omega + u_3(t))\sigma_z,
\end{align}
so that any pure state can be realized as an eigenstate of the Hamiltonian through appropriate constant controls, i.e., \eqref{eq:eigenstate_X_ref_u_ref} can be satisfied.

The control performance is measured in terms of the final fidelity $\big| \langle \psi_{\mathrm{ref}}, \psi(T) \rangle \big|^2$.
The computational efficiency is quantified by the total runtime required to compute the QOC control sequence over the entire time horizon, which includes the repeated solution of the MPQC problem~\eqref{eq:MPQC}. 
For numerical implementation, the system is discretized over a total evolution time of $T = 5$~ns with $N = 100$ timesteps, resulting in a timestep $\Delta t = T/N = 0.05$ns.

The prediction horizon is set to $L=10$ time steps. 
Further, throughout all numerical results in Section~\ref{sec:numerical_results}, the stage cost parameters in~\eqref{eq:stage_cost} are chosen as $\alpha = 1$ and $R = 10^{-4} I$.
We solve the problem with CasADi and solver IPOPT (Appendix~\ref{app:numerical_implementation}) and with GRAPE.
The standard GRAPE algorithm computes the gradient of only the terminal cost with respect to the control inputs.
Thus, we employ a modified version which computes the gradient of the full cost~\eqref{eq:MPQC_cost}, including both the stage cost and terminal cost.

The 
step size in each gradient step 
is set to $0.2$. 
Moreover, the input constraints are ensured via projection.

\begin{table}[h]
\centering
\caption{Basic MPQC: Comparison of CasADi with solver IPOPT and GRAPE for different target states}
\label{tab:qoc_comparison1}
\begin{tabular}{lccc}
\toprule
\textbf{Target} & \textbf{Method} & \textbf{Final fidelity} & \textbf{Runtime [s]} \\
\midrule
\multirow{3}{*}{$\ket{1}$} & GRAPE & 0.999239 & 92.78 \\
 & CasADi IPOPT & 1.000000 & 4.74 \\
\midrule
\multirow{3}{*}{$\ket{+}$} & GRAPE & 0.994909 & 71.46 \\
 & CasADi IPOPT & 0.999733 & 5.31 \\
\midrule
\multirow{3}{*}{$\ket{-}$} & GRAPE & 0.995364 & 70.05 \\
 & CasADi IPOPT & 0.999733 & 5.44 \\
\bottomrule
\end{tabular}
\end{table}

Table~\ref{tab:qoc_comparison1} summarizes the final fidelity and average runtime when using the basic MPQC scheme with CasADi and GRAPE for different target states.
With either optimization method, MPQC yields a high final fidelity.
For this example, CasADi achieves superior performance in comparison to GRAPE while admitting a smaller runtime.

\subsection{MPQC with terminal equality constraint}
\label{subsec:importance_of_terminal_equality_constraint}

In the following, we apply the terminal equality constrained (TEC) MPQC scheme from Section~\ref{sec:MPQC_TEC} and compare it to the basic MPQC scheme from Section~\ref{sec:MPQC}.
Since GRAPE and Krotov's method do not allow one to implement a terminal equality constraint (compare Section~\ref{sec:MPQC_opt}), we only employ CasADi with IPOPT.
We consider a state transfer problem for the single-qubit system governed by the Hamiltonian
\begin{equation}\label{eq:numerical_results_TEC_Hamiltonian}
H(t) = \omega \, \sigma_z + u(t) \, \sigma_x.
\end{equation}

The control objective involves transferring the quantum state from 
\begin{equation}
\ket{+} = \frac{1}{\sqrt{2}} (\ket{0} + \ket{1})
\end{equation}
to 
\begin{equation}
\ket{-} = \frac{1}{\sqrt{2}} (\ket{0} - \ket{1}).
\end{equation} 
Note that this example 
violates the condition~\eqref{eq:eigenstate_X_ref_u_ref}, i.e., 
the target state is \emph{not} an eigenstate of the Hamiltonian $H(u_{\mathrm{ref}})$ for any constant reference input $u_{\mathrm{ref}}$. 
To ensure feasibility of the TEC MPQC formulation in this setting, the prediction horizon must be chosen sufficiently large. In particular, we set $L = 30$.

Both the basic MPQC scheme from Section \ref{sec:MPQC} and the TEC MPQC approach from \ref{sec:MPQC_TEC} are applied to the single-qubit state transfer task, and their performance is compared in terms of final fidelity and corresponding runtimes, compare Table~\ref{tab:TEC_comparison}. 
\begin{table}[h!]
\centering
\caption{Basic vs. TEC MPQC for single-qubit state transfer}
\label{tab:TEC_comparison}
\begin{tabular}{lccc}
\hline
\textbf{MPC Scheme} & \textbf{Final Fidelity} & \textbf{Runtime[s]} & \\
\hline
basic & 0.865947 & 8.40\\
TEC & 1.000000 & 62.06 \\
\hline
\end{tabular}
\end{table}

The TEC MPQC scheme achieves exact convergence to the target state. In contrast, the basic MPQC approach 
does not achieve exact state transfer. Notably, increasing the control or prediction horizon in the unconstrained scheme does not lead to exact convergence.
\begin{figure}[t]              \includegraphics[width=0.52\textwidth]{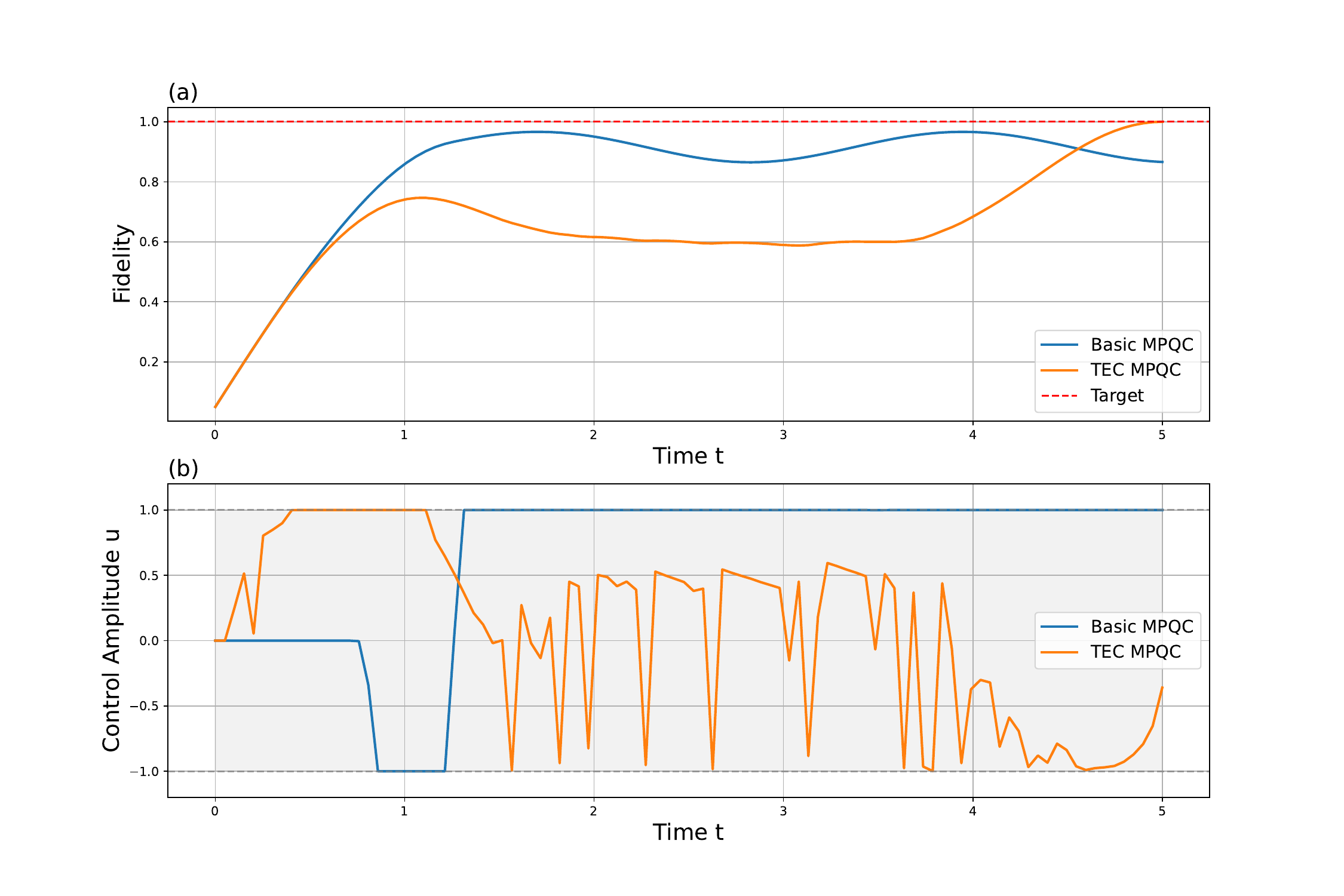} \\
    \caption{System evolution under the MPQC schemes for the single-qubit state transfer 
from $\ket{+}$ to $\ket{-}$, including (a) the fidelity and (b) the control input $u$.}
\label{fig:TEC_comparison}
\end{figure}

Figure~\ref{fig:TEC_comparison} shows that the two MPQC control inputs are of qualitatively different shapes.
The basic MPQC scheme yields a bang-bang-shaped control, 
whereas the TEC MPQC leads to
oscillatory control inputs. 
While the TEC MPQC scheme guarantees exact convergence, it does so at the expense of significantly increased computational effort.

\subsection{MPQC with setpoint optimization}
\label{subsec:tracking_mpc}
In this subsection, we show numerical results for the MPQC scheme with setpoint optimization from Section~\ref{sec:MPQC_setpoint}, which leverages an artificial steady-state 
to reduce the runtime while maintaining high fidelity.
We consider the Hamiltonian~\eqref{eq:numerical_results_TEC_Hamiltonian} with initial state $\ket{\psi_0}=\ket0$ and target state $\ket{\psi_{\mathrm{ref}}}=\ket1$.
Further, we use CasADi with IPOPT to solve the corresponding optimization problem~\eqref{eq:MPQC_tracking}.
The additional cost parameters of the MPQC problem~\eqref{eq:MPQC_tracking} are chosen as $\eta = 5$, $S = I$.

\begin{figure}[t]
    \centering 
    \includegraphics[width=0.48\textwidth]{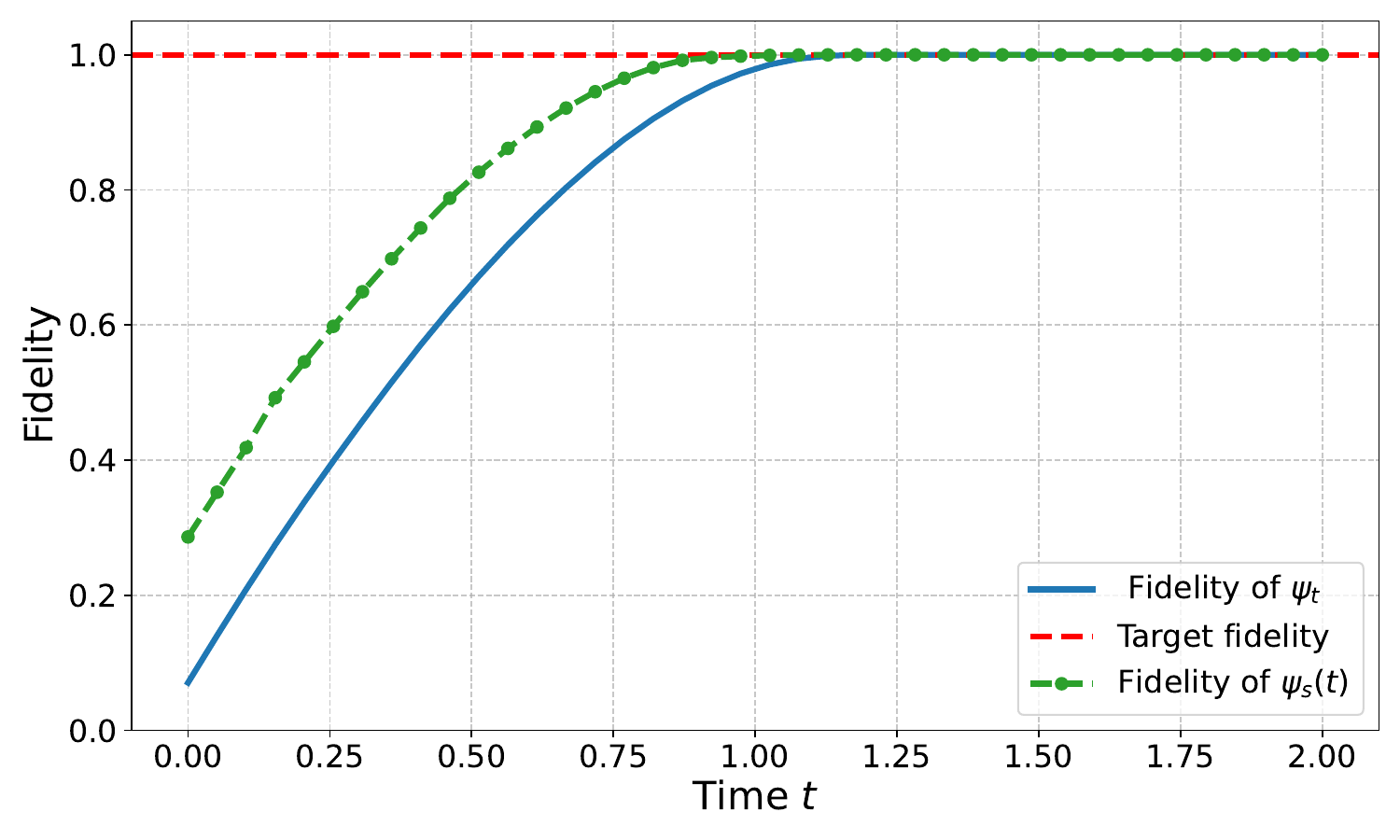}
    \caption{Single-qubit state transfer from $\ket{0}$ to $\ket{1}$ under the MPQC scheme with setpoint optimization.}    
    \label{fig:tracking_mpqc_2}
   \end{figure}

For this MPQC scheme, the minimal prediction horizon can be reduced to $L=2$ without compromising performance. Despite this drastic reduction in horizon length, the controller achieves high fidelity ($F_\mathrm{end} = 1.0000$) while keeping the total runtime at only $\Delta t_\mathrm{MPQC} = 1.23$~s. 
Figure~\ref{fig:tracking_mpqc_2} shows the fidelity of the controlled systems over time along with that of the artificial setpoint.

Using the same setup as in Figure~\ref{fig:tracking_mpqc_2}, we further investigate the benefits of applying setpoint optimization to state transfer tasks targeting other states on the Bloch sphere. The results are summarized in Table~\ref{tab:qoc_comparison3}.

\begin{table}[h]
\centering
\caption{TEC vs. Setpoint Optimization MPQC using CasADi IPOPT optimizer}
\label{tab:qoc_comparison3}
\begin{tabular}{lcccc}
\toprule
\textbf{Target} & \textbf{Scheme} & \textbf{$F_\mathrm{Final}$} & \textbf{$\Delta t_\mathrm{MPQC}[s]$} & \textbf{$L_\mathrm{min}$} \\
\midrule
$\ket{1}$ & TEC & 1.000000 & $15.33$ & $20$ \\
          & Setpoint opt. & 1.000000 & $1.23$ & $2$ \\
\midrule
$\ket{+}$ & TEC & 1.000000 & $17.42$ & $15$ \\
          & Setpoint opt. & $1.000000$ & $1.17$ & $2$ \\
\midrule
$\ket{-}$ & TEC & $1.000000$ & $16.19$ & $15$ \\
          & Setpoint opt. & $1.000000$ & $1.11$ & $2$ \\
\bottomrule
\end{tabular}
\end{table}

For each state transfer simulation, the prediction horizon was set to its minimal value that still ensures proper state convergence and satisfies the constraints on average up to a numerical tolerance of $10^{-7}$ for at least $80\%$ of the optimization steps. 
The simulations demonstrate the superior performance of the MPQC with setpoint optimization compared to the TEC MPQC approach 
when the target is an eigenstate of the controlled Hamiltonian for some control input. 
On the other hand, if this assumption is violated, then MPQC with setpoint optimization can fail to stabilize the target setpoint.

\subsection{Comparison MPQC vs.\ QOC}\label{subsec:numerical_implementation_comparison}
In this subsection, we compare the performance and runtime of MPQC to QOC.
We consider a pure state transfer problem from the ground state $\ket{0}$ to the excited state $\ket{1}$ for a two-level system with Hamiltonian
\begin{align}
    H(t) = \sigma_z +  u_1(t)\sigma_x + u_2(t)\sigma_y 
\end{align}
Figure~\ref{fig:comparison_QOC} compares the performance and computational complexity of the basic MPQC scheme from Section~\ref{sec:MPQC} vs.\ directly solving the QOC problem~\eqref{eq:QOC}.
All involved optimization problems are solved using CasADi with IPOPT.
The MPQC scheme yields a considerable speedup in comparison to QOC, which is especially pronounced for small values of the prediction horizon $L$.
Notably, this is achieved with only very little loss of performance.
In particular, Figure~\ref{fig:comparison_QOC} shows that the total cost achieved by QOC and MPQC is comparable as long as $L\geq3$.

\begin{figure}[t]
    \centering
    \includegraphics[width=0.48\textwidth]{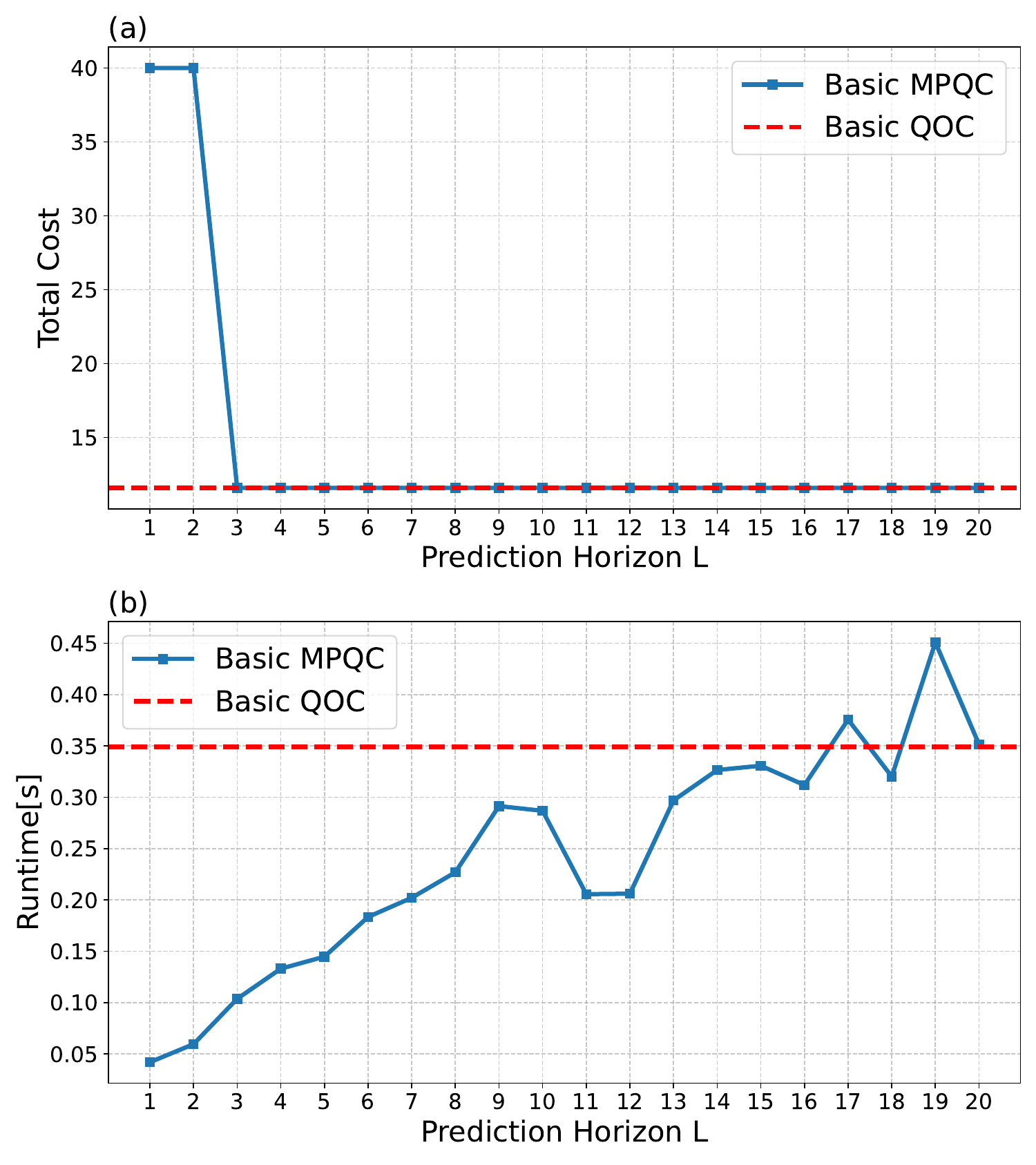}
    \caption{Comparison of basic MPQC and QOC for pure state transfer from $\ket{0}$ to $\ket{1}$.
    \textbf{(a)} Total cost achieved by basic MPQC (blue squares) as a function of the prediction horizon $L$. The red dashed line indicates the cost obtained from solving the corresponding basic QOC problem~\eqref{eq:QOC} over the full horizon $N$ directly.
\textbf{(b)} Computational runtime of basic MPQC depending on the prediction horizon and of QOC over the full horizon $N$.} 
\label{fig:comparison_QOC}
\end{figure}

\subsection{Robustness of closed-loop MPQC}
\label{subsec:robustness_of_mpqc}

In the following, we apply MPQC in a closed-loop quantum control setup, i.e., using measurements of the state $\ket{\psi_t}$ for solving the optimization problem~\eqref{eq:MPQC}.
To study the effect of model mismatch, we consider the Hamiltonian
\begin{align}\label{eq:noisy_hamiltonian}
H = (\omega + \epsilon) \sigma_x + u_1(t) \sigma_y + u_2(t) \sigma_z
\end{align}
where $\omega = 1.0$ represents the nominal drift frequency and $\epsilon \in [-1, 1]$ is an unknown error parameter affecting the $\sigma_x$ component. 
We apply both open-loop QOC based on GRAPE over $N=40$ time steps as well as the TEC MPQC scheme with horizon $L=3$, parameter $M=1$, and solved via CasADi and IPOPT.
In the respective optimization problems~\eqref{eq:QOC} and~\eqref{eq:MPQC_TEC}, the above Hamiltonian is used with $\epsilon=0$ since the precise value of $\epsilon$ is assumed to be unknown.
To compute the final fidelity obtained via the two approaches, we choose values of $\epsilon$ from a uniform grid over $[-1,1]$.

Figure~\ref{fig:robustness} shows that closed-loop TEC MPQC yields a significantly improved final fidelity in comparison to open-loop QOC when the error $\epsilon$ is non-zero.
This is due to the feedback obtained through the state measurement entering the optimization problem~\eqref{eq:MPQC_TEC} at any time.

\begin{figure}[t]
    \centering
    \includegraphics[width=0.55\textwidth]{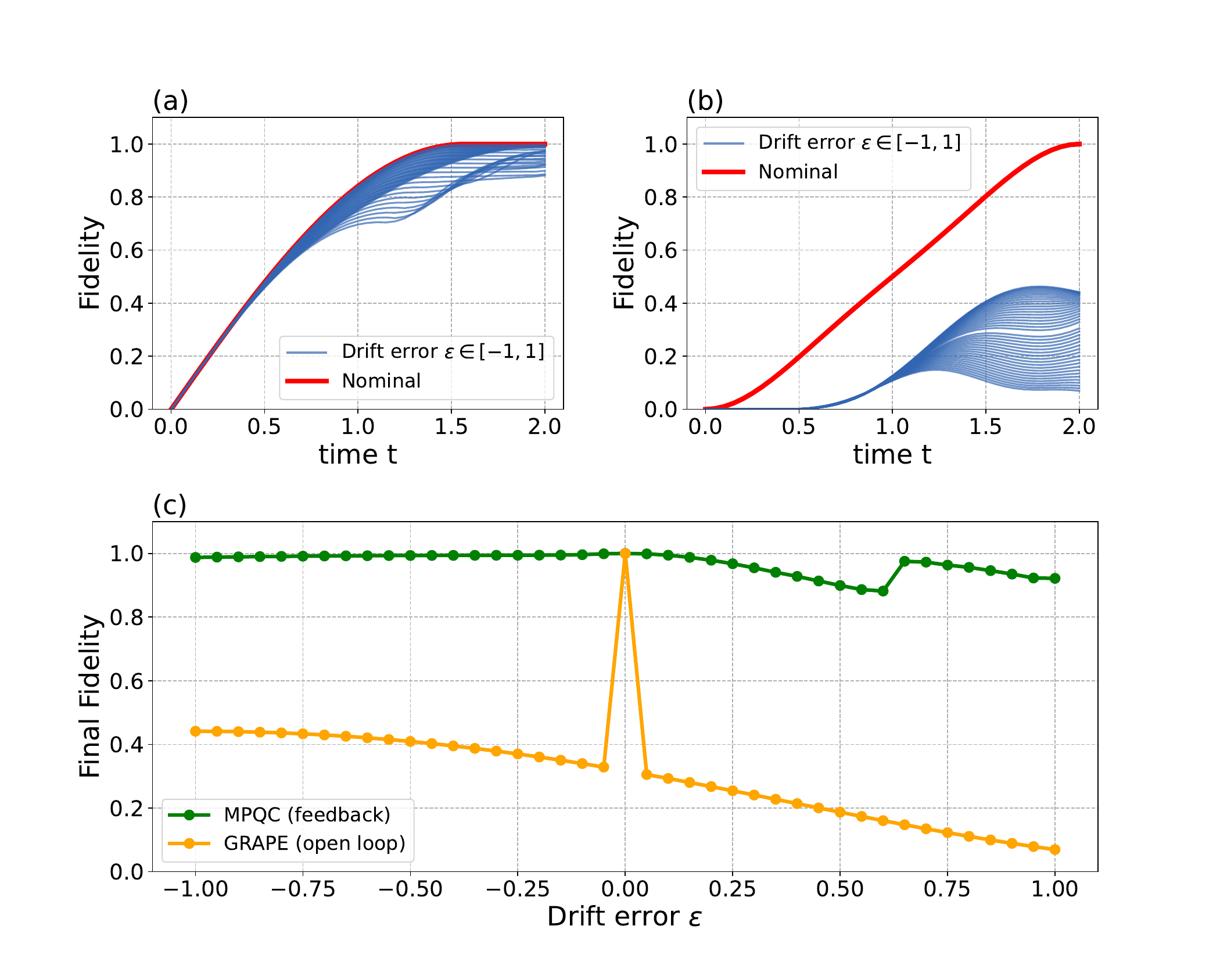}
\caption{Robustness comparison between closed-loop TEC MPQC and open-loop QOC for pure state transfer from $\ket{0}$ to $\ket{1}$ for a single-qubit system.
\textbf{(a)} Closed-loop TEC MPQC:
The red curve shows the nominal fidelity evolution in the ideal case $\epsilon=0$, whereas the blue curves depict the fidelities when the controlled quantum system evolves according to the noisy Hamiltonian~\eqref{eq:noisy_hamiltonian} for $\epsilon\in[-1,1]$.
\textbf{(b)} Open-loop GRAPE: Analogous to (a).
\textbf{(c)} Final fidelity achieved by both methods depending on the error $\epsilon \in [-1, 1]$.} 
    \label{fig:robustness}
\end{figure}

\section{Conclusion}\label{sec:discussion}

In this paper, we introduced the MPQC framework which applies MPC to QOC problems.
The key idea is to partition the original QOC problem with time horizon $N$ into multiple smaller QOC problems with horizon $L$.
This can lead to substantial improvements of efficiency and robustness in open-loop and closed-loop implementations.
We introduced an MPQC scheme based on terminal equality constraints, which admits theoretical guarantees on exponential stability, as well as MPQC with setpoint optimization, which is more practical and relies on the solid theoretical foundation of MPC for tracking.
Throughout this paper, we treated open-loop and closed-loop applications of MPQC in a unifying framework.
In particular, the presented MPQC schemes can be applied in either scenario, depending on whether the state update is obtained via simulation (open-loop) or via measurements (closed-loop).
Our numerical results demonstrate that MPQC can substantially reduce the computational complexity in comparison to standard open-loop QOC, and it can improve the robustness against model mismatch in a closed-loop implementation.

This paper provides the basis for several interesting future research directions.
In open-loop MPQC, the framework can be improved by applying more sophisticated MPC schemes to improve the practicality or allow for more general problem formulations, compare~\cite{rawlings2020model}.
Deriving theoretical guarantees for the MPQC scheme with setpoint optimization from Section~\ref{sec:MPQC_setpoint} is also relevant, especially when the target state is not an eigenstate of the controlled Hamiltonian.
Further, the presented closed-loop MPQC approach faces the limitation of relying on state tomography, leading to a possibly large overhead of required experiments.
In this regard, shadow tomography is a promising tool to reduce the number of required samples~\cite{huang2020predicting}.
Alternatively, it would be interesting to explicitly include measurement effects as in the time-optimal MPQC approaches from~\cite{lee2024robust,lee2024model,lee2025time}, or to develop alternative schemes based on weak measurements~\cite{wiseman2009quantum}.
Finally, we plan to apply the framework to larger benchmarking problems to showcase its potential advantages.

\bibliographystyle{IEEEtran}
\bibliography{references}

\appendix

\section{Details on the numerical implementation}\label{app:numerical_implementation}

The exact optimization problems introduced in Sections ~\ref{sec:MPQC}–\ref{sec:MPQC_setpoint} cannot be passed directly to standard nonlinear optimization toolboxes such as CasADi.  The main obstacle is that the discrete-time quantum dynamics are naturally expresssed in terms of complex-valued state vectors and operators, while CasADi does not natively support optimization over complex-valued functions. To address this issue, we reformulate the dynamics into an equivalent real-valued system. More precisely, a pure quantum state $\ket{\psi} \in \mathbb{C}^d$ evolving according to the Schrödinger equation
\begin{align}
    |\dot{\psi}(t)\rangle = H(u(t))\psi(t), \quad H(u(t)) = \Big( H_0 + \sum_{j=1}^m u_j(t) H_j \Big)
\end{align}
is decomposed into $\psi = a + i b$, with $a,b \in \mathbb{R}^d$. For $H(u(t)) = H_r(t) + i H_i(t)$, the dynamics translate into a coupled real-valued system for $a$ and $b$:
\begin{align}
\begin{bmatrix}
\dot{a}(t) \\[0.3em] \dot{b}(t)
\end{bmatrix}
=
\begin{bmatrix}
H_i(t) & H_r(t) \\[0.3em]
-H_r(t) & H_i(t)
\end{bmatrix}
\begin{bmatrix}
a(t) \\[0.3em] b(t)
\end{bmatrix},
\end{align}
This representation is mathematically equivalent to the original complex dynamics and can be implemented directly in CasADi.
In particular, fidelity terms $\left| \langle \psi_{\rm ref}, \psi \rangle \right|^2$ become quadratic expressions in $a$ and $b$. 
Similarly, for mixed states, the density operator $\rho \in \mathbb{C}^{d \times d}$ evolves according to the von Neumann equation
\begin{align}
    \dot{\rho}(t) = -i \big[ H(u(t)), \rho(t) \big], \qquad H(u) = H_0 + \sum_{j=1}^m u_j(t) H_j.
\end{align}
Decomposing $\rho = A + i B$, with $A,B \in \mathbb{R}^{d \times d}$, we obtain
\begin{align}
\begin{bmatrix}
\dot{A}(t) \\[0.3em] \dot{B}(t)
\end{bmatrix}
=
\begin{bmatrix}
[H_i(t), A] + [H_r(t), B] \\[0.5em]
[H_i(t), B] - [H_r(t), A]
\end{bmatrix}.
\end{align}
Vectorization of $A$ and $B$ into real vectors $a = \mathrm{vec}(A)$ and $b = \mathrm{vec}(B) \in \mathbb{R}^{d^2}$ yields a coupled real-valued system
\begin{align}
\begin{bmatrix}
\dot{a}(t) \\[0.5em] \dot{b}(t)
\end{bmatrix}
=
\mathcal{L}(u(t))
\begin{bmatrix}
a(t) \\[0.5em] b(t)
\end{bmatrix},
\label{mixed_state_dynamic}
\end{align}
where $\mathcal{L}(u)$ is the real block-structured superoperator corresponding to the commutator dynamics. 

In both cases, the reformulation step is a technical modification specific to CasADi and related optimization packages. In contrast, dedicated gradient-based quantum optimal control (QOC) algorithms such as GRAPE and Krotov’s method are designed to directly operate on the original complex-valued dynamics and therefore do not require such modifications.

\end{document}